\newtheorem{theorem}{Theorem}
\newtheorem*{remark}{Remark}
\newtheorem{corollary}{Corollary}
\begin{document}

\title{Query-Based Sampling of Heterogeneous CTMCs: Modeling and Optimization with Binary Freshness}
\author{Nail~Akar,~\IEEEmembership{Senior~Member,~IEEE,}~and~Sennur~Ulukus,~\IEEEmembership{Fellow,~IEEE}
%\author{Nail~Akar,~\IEEEmembership{Senior~Member,~IEEE}
%\author{Nail Akar \qquad Sennur Ulukus
\thanks{N.~Akar is with the Electrical and Electronics Engineering Department, Bilkent University, Ankara, Turkey 
(e-mail:akar@ee.bilkent.edu.tr) 

This work is done when N.~Akar is on sabbatical leave at University of Maryland, MD, USA. The work of N.~Akar is supported in part by the Scientific and Technological Research Council of Turkey 
(T\"{u}bitak) 2219-International Postdoctoral Research Fellowship Program.} 
\thanks{S.~Ulukus is with the Department of Electrical and Computer Engineering, University of Maryland, MD, USA (e-mail:ulukus@umd.edu).}}

\maketitle

\begin{abstract}
We study a remote monitoring system in which a mutually independent and heterogeneous collection of finite-state irreducible continuous time Markov chain (CTMC) based information sources is considered. In this system, a common remote monitor queries the instantaneous states of the individual CTMCs according to a Poisson process with possibly different intensities across the sources, in order to maintain accurate estimates of the original sources. \color{black}Three information freshness models are considered to quantify the accuracy of the remote estimates: fresh when equal (FWE), fresh when sampled (FWS) and fresh when close (FWC). For each of these freshness models, closed-form expressions are derived for mean information freshness for a given source. Using these expressions, optimum sampling rates for all sources are obtained so as to maximize the weighted sum freshness of the monitoring system, subject to an overall sampling rate constraint. This optimization problem leads to a water-filling solution with quadratic worst case computational complexity in the number of information sources. Numerical examples are provided to validate the effectiveness of the optimum sampling policy in comparison to several baseline sampling  policies. 

\end{abstract}

\begin{IEEEkeywords}
Markov information sources, information freshness, optimum sampling, water-filling based optimization.
\end{IEEEkeywords}

\section{Introduction} \label{section:Introduction}
Timely delivery of status update packets from a number of information sources for maintaining information freshness at a remote monitor (or destination) has recently gained significant attention for the development of applications requiring real-time monitoring and control.
In such applications, information sources generate status update packets that contain the samples of an underlying random process, e.g., a sensor sampling a physical quantity such as temperature, humidity, etc., or an information item with time-varying content such as news, weather reports, etc., which are subsequently delivered to the monitor through a communication network. In the push-based communication paradigm, information sources decide when to sample and form the information packets, which are subsequently forwarded towards the destination \cite{Yates__SBR, Moltafet__MultiSourceQueue}. On the other hand, in query-based (or pull-based) communications,  monitors proactively query the information sources upon which sampling takes place \cite{pullBased_1,pullBased_2}. 

For the design and optimization of status update systems, a need for quantifying information freshness is evident. 
For this purpose, age of information (AoI) is commonly used which is a continuous-valued continuous-time stochastic process maintained at the destination that keeps track of the time elapsed since the reception of the last status update packet received from a particular information source. 
The AoI process was first introduced in \cite{Yates__HowOftenShouldOne} for a single-source M/M/1 queuing model which resulted in substantial interest in the modeling of AoI and its optimization in very general settings including multiple sources \cite{kosta_etal_survey, RoyYates__AgeOfInfo_Survey}. 

In this paper, the interest will be on the remote estimation of discrete-valued information sources, in particular CTMC based information sources, for which the status of the sources change at random instants, and the source dynamics (set of states and transition rates) is known at the monitor. In this setting, the monitor and the source are synchronized (in sync or fresh) when the monitor's estimate overlaps with the state of the source. Otherwise, they are de-synchronized (out of sync or erroneous or stale). Although performance metrics derived from AoI have played an important role for the development of status update systems for continuous-valued information sources, these metrics are not as suitable for discrete-valued information sources with known dynamics, the latter being the focus of this paper. For example, even when the monitor has perfect
knowledge of the source when its discrete status stays intact, its corresponding AoI process would continue to increase
with time, leading to an undesired penalty \cite{maatouk_etal_ton20}. 

On the basis of such shortcomings of AoI, alternative performance metrics have been proposed including age of synchronization (AoS) which is defined as the elapsed time since the content at the destination has de-synchronized with the source \cite{AoS_1__Yates__TwoFreshnessMetricsForLocal, AoS_2__SchedulingToMinimizeAoSInWireless, AoS_3__TimelySynchronizedSporadic}. With AoS, the penalty of the monitor having an erroneous estimate of the source increases linearly with time as long as the de-synchronization condition stays. A more general process proposed recently is age of incorrect information (AoII) which is defined as the product of an increasing time penalty function and 
another penalty function chosen general enough to depend on both the current estimate at the monitor and the actual
state of the process \cite{maatouk_etal_ton20}. AoII covers AoS as a sub-case whereas other application-oriented choices for the two penalty functions are further elaborated in \cite{maatouk_etal_ton20}.  
For certain applications including caching systems, dissatisfaction of being in an erroneous state does not necessarily increase with time but instead, with the number of versions the monitor lags with respect to the original source. For this purpose, version age (VA) has been introduced which keeps track of the number of status changes that have occurred at the source since the last time the content at the destination is de-synchronized with the source \cite{Abolhassani__VA_FreshCachingForDynamic, Yates__VA_AgeOfGossipInNetworks, melih2020infocom,
Sennur__VA_AgeOfGossipCommunityStructure}. 
In some applications, in contrast to the age-driven metrics, the information at the destination may not possess a value unless the content at the destination is in sync with the source irrespective of the duration of the erroneous state. In such applications, there is not a need for the dissatisfaction to grow with time or with outdated versions, which gives rise to the so-called {\em binary freshness} (BF) process which takes the value of one when the information at the destination is in sync with the source, and is zero otherwise. 
This simple BF metric was extensively used in the optimization of Web crawling systems by which local copies of remote Web pages are managed for Web search engines \cite{crawling1,crawling2}. BF has recently been used as a performance metric for freshness quantification in other computer and communication systems, such as in cache update systems \cite{Sennur__BF_MaximizingInfoFreshness, Sennur__BF_FreshnessBasedCache, Sennur__BF_InfoFreshnessInCacheUpdating,gamgam_akar_iot23}, gossiping networks \cite{Sennur__BF_GossipingWithBinaryFreshness,bastopcu_etal_entrop24}, infection tracking systems \cite{bastopcu_ulukus_entropy22}. The focus of this study is applications for which the use of the BF metric is suitable for freshness quantification.  
\color{black}
\begin{figure}[t]
	\centering
	\begin{tikzpicture}[scale=0.25]
		\tikzstyle{note} = [rectangle, dashed, draw, fill=white, font=\footnotesize,
		text width=20em, text centered, rounded corners, minimum height=18em]
		\draw[very thick, red](4,2) circle (2.2);
		\draw[very thick, green](4,-7) circle (2.2);
		\draw (4,0.7) node[anchor=south] {\scriptsize{$X_2(t)$}} ;
		\draw[very thick, blue] (4,8) circle (2.2)  ;
		\draw (4,6.7) node[anchor=south] {\scriptsize{$X_1(t)$}} ;
		\draw (4,-8.3) node[anchor=south] {\scriptsize{$X_N(t)$}} ;
	  \draw[ultra thick, red,<-] (7,2) -- (12,2) ;
        \draw (9.5,2) node[anchor=south] {\scriptsize{$\lambda_2$}};	
        \draw[ultra thick, blue,<-] (7,7) -- (12,5) ;
        \draw (9.9,6.3) node[anchor=south] {\scriptsize{$\lambda_1$}};	
        \draw (9.5,-3.1) node[anchor=south] {\scriptsize{$\lambda_N$}};	
	  \draw[ultra thick, green,<-] (7,-5.5) -- (12,-2) ;
	  \draw (4,-3.8) node[anchor=south] {{\Large $\vdots$}};
		\draw (9.5,-1.3) node[anchor=south] {{\Large $\vdots$}};
		\draw[thick, gray](17,2.5) circle (4);
		\filldraw (17,3.5)  node[anchor=center] {\scriptsize{$\tilde{X}_n(t)$}};	
		\filldraw (17,1.5)  node[anchor=center] {\scriptsize{$1 \leq n \leq N$}};	
		\filldraw (17.7,-2.5) node[anchor=center] {\small{remote monitor}};	
	\end{tikzpicture}
	\caption{Query-based status update system in which the monitor queries the CTMC $X_n(t)$ associated with source-$n$ with intensity $\lambda_n$, to maintain an estimate $\tilde{X}_n(t)$ of $X_n(t)$.}
	\label{fig:systemmodel}
 \vspace*{-0.5cm}
\end{figure}
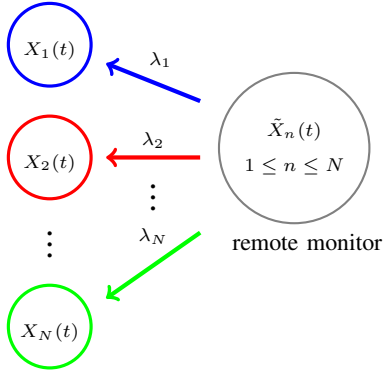	 
In this paper, we consider the query-based communication system in Fig.~\ref{fig:systemmodel} involving a number of CTMC-based information sources each of which is queried by a common remote monitor according to a Poisson process after which samples are taken and sent as status update packets towards the monitor, to maintain a remote estimate of the original process. Queries and status update packet transmissions are assumed to take place immediately. 
The individual CTMCs are finite-state and irreducible, mutually independent, and their heterogeneous dynamics (state-space and transition rates do not have to be the same across the sources) are known at the monitor. We employ a martingale estimator that is also used in similar studies such as \cite{maatouk_etal_ton20,kam_etal_infocom20,bastopcu_ulukus_entropy22}, for which the remote estimate of the source between two consecutive samples, is the value of the first sample. 
The martingale estimator is easy to maintain, and moreover, we will show that it is possible to obtain closed-form expressions for several freshness metrics with this estimator, which subsequently leads to an algorithmic optimization procedure for obtaining the optimum sampling rates in a heterogeneous setting. 
Non-Poisson query arrivals, such as periodic queries, more advanced estimators such as the maximum likelihood estimator (MLE), and non-zero transmission times for query and status update packets, are left for future research. 

We employ three non-age driven freshness models, two being binary, to quantify the accuracy of the martingale estimators. In fresh when equal (FWE), the information is fresh at the destination when the remote estimator is in sync with the state of the original source. In the fresh when sampled (FWS) freshness model (which is also binary), the information becomes stale when the state of the original source changes and it becomes fresh only when a new sample is taken.  
For FWS, when the freshness process is stale, it cannot be set until the remote monitor re-samples the process, which is in contrast with FWE. The advantage of the FWS model is that it represents certain applications where freshness can be regained only with an explicit user action once it has been lost, and further, closed form expressions for mean freshness can be obtained for FWS for a very general class of CTMCs.
The non-binary fresh when close (FWC) model is a generalization of the binary FWE model where the level of freshness depends on the proximity of the original process and its estimator at the monitor, yielding more flexibility than FWE.   
\color{black}

In this work, for the three freshness models of interest,  we first find closed form expressions for mean freshness as a function of the sampling rate and subsequently we find the optimum sampling rate for each source that maximizes the weighted sum freshness for the system, under a total sampling rate constraint for the monitor. 
We show that this optimization problem possesses a water-filling structure which amounts to a procedure in which a limited volume of water is poured into a pool organized into a number of bins with different ground levels with the water level in each bin giving the desired optimum solution \cite{WF_PracticalAlgorithms}. 
Water-filling based optimization has been successfully used in solving optimal resource allocation problems arising in wireless networks; see \cite{WF_Poor, WF_PracticalAlgorithms} for two related surveys, and also references \cite{exact_wf_algorithm} and \cite{WF_EnergyAllocation}, that make use of water-filling techniques. 
Iterative methods are available for water-filling based optimization which will be shown to give rise to an algorithm for the optimum monitoring problem in Fig.~\ref{fig:systemmodel} with quadratic worst-case computational complexity in the number of sources.

The contributions of our paper are as follows: 
\begin{itemize}
\item To the best of our knowledge, optimum sampling of heterogeneous CTMCs under overall sampling rate constraints has not been explored in the literature. Our work initiates this line of study. 
\item We derive closed form expressions for mean freshness for FWE, FWC and FWS models for finite-state and irreducible CTMCs. For the FWS model, the obtained expression is in terms of the sum of first order rational functions of the sampling rate which is a strictly concave increasing function. For the FWE and FWC models, similar expressions are obtained for the sub-case of time-reversible CTMCs (using the real-valued eigenvalues, and eigenvectors of the corresponding generators) that cover the well-established birth-death Markov chains that arise frequently in the performance of computer and communication systems. 
\item  The obtained expressions allow us to use computationally efficient water-filling algorithms to obtain optimum sampling policies.
\end{itemize}

\color{black}
The remainder of this paper is organized as follows. 
Related work is summarized in Section~\ref{section:Related}.
Preliminaries on CTMCs are given in Section~\ref{sec:preliminaries}.
In Section~\ref{section:SystemModel}, the detailed system model is presented. Section~\ref{section:analytical} addresses the derivation of mean freshness expressions for the three freshness models. The water-filling based optimization algorithm is presented in Section~\ref{section:Optimization}. A comparative evaluation of the proposed optimum sampler and several baseline sampling policies are presented in Section~\ref{section:Examples}. Conclusions, some open problems and potential future directions are given in Section~\ref{section:Conclusions}. 

\section{Related Work} \label{section:Related}
The majority of the remote estimation problems in the literature are in the discrete-time setting. In \cite{huang_etal_TWC20}, the authors study the remote estimation of a linear time-invariant dynamic system while focusing on the trade-off between reliability and freshness.   In \cite{yutao_ephremides_globecom21}, AoII is investigated in a status update system involving a multi-state Markovian information source, a monitor, and a channel susceptible to packet errors, and the communication goal is to minimize AoII subject to a power constraint. 
The authors of \cite{pappas_kountouris_ICAS21} study a transmitter monitoring the evolution of a two-state discrete Markov source and sending status updates to a destination over an unreliable wireless channel for the purpose of real-time source reconstruction for remote actuation. This work is then extended in \cite{fountoulakis_etal_COMLET23} with more general discrete stochastic source processes and resource constraints. The work presented in \cite{champati_etal_tcom22} studies the trade-off between the sampling frequency and staleness in detecting the events through a freshness metric called age penalty which is defined as the time elapsed since the first transition out of the most recently observed state. 
The authors of \cite{salimnejad_etal_tcom24} investigate a time-slotted communication system for tracking a discrete-time Markovian source
with joint sampling and transmission over a wireless channel.

For the continuous-time setting, \cite{sun_etal_TIT20} investigates the problem of sampling a Wiener process with samples forwarded to a remote estimator over a channel that is modeled as a queue.
The authors of \cite{inoue_takine_infocom19} investigate the effect of AoI on the accuracy of a remote monitoring system which displays the latest state information from a
CTMC and they develop a computational method for finding the conditional
probability of the displayed state, given the actual current state
of the information source.
The authors of \cite{cosandal_etal_isit24} obtain a push-based sampling policy for remote tracking of a CTMC source subject to a sampling rate constraint using constrained semi-Markov decision processes.
A common feature of the above works is the existence of a single information source which gets to be sampled. On the other hand, \cite{bastopcu_ulukus_entropy22} studies the sampling of a collection of heterogeneous two-state CTMC-based information sources each modeling whether an individual is infected with a virus or not, while using the binary freshness metric, and \cite{melih2020infocom} studies sampling of multiple heterogeneous Poisson processes representing the citation indices of multiple researchers with the goal of keeping timely estimates (similar to version age) of all the random processes. In the current manuscript, we study a heterogeneous collection of general finite-state (not necessarily two-state) CTMC-based information sources with three different freshness models (including the binary freshness metric) where the sources need to be sampled for remote estimation with a constraint on the overall sampling rate.

\section{Preliminaries}
\label{sec:preliminaries}
The focus of the current paper is on irreducible, finite-state, time-homogeneous (transition rates do not depend on time) CTMCs with their main properties given in this section based on \cite{gallager_book,norris_book}. We consider the CTMC $X(t), t \geq 0, X(t) \in \mathcal{K} = \{ 1,2,\ldots,K\}$, where $K \geq 1$ is the number of states. 
The process $X(t)$ has the infinitesimal generator $Q$ of size $K$ with its $(i,j)$th entry denoted by $q_{ij}$ which is the transition rate from state $i$ to state $j$ for $i \neq j$ and its diagonal entries are strictly negative satisfying $Q e=0$, where $e$ is a column vector of ones of appropriate size. The CTMC $X(t)$ is called irreducible if it is possible with some
positive probability to get from any state to any other state in some finite time. An irreducible CTMC does not have any transient states. Hence, the limiting probability of $X(t)$ being in state $j$ conditioned on being in state $i$ at time zero, exists and does not depend on the initial state,
\begin{align}
     \lim_{t \rightarrow \infty} \mathbb{P} [X(t)=j | X(0)=i] \label{limiting} & =: \pi_j.
\end{align}
Moreover, the row vector $\pi = \{ \pi_j \}$, known as the stationary (or invariant) distribution, satisfies 
$\pi Q =0$ and $\pi e=1$, which are known as the global balance equations (GBE).
The GBEs for CTMCs are a set of equations, one for each state $s$ of the CTMC, which states that the total probability flux out of a state $s$ should be equal to the total probability flux from other states into the state $s$, in steady-state \cite{gallager_book}. 

Irreducible, finite-state CTMCs are ergodic, i.e., for any function $f:\mathcal{K} \rightarrow \mathbb{R}$, the following holds,
\begin{align}
\lim_{\tau \rightarrow \infty} \frac{1}{\tau} \int_{t=0}^{\tau} f_{X(t)} \dd{t} &= \sum_{i=1}^K f_i \pi_i, \label{ergodic}
\end{align}
where $f_i$ is the value of the function at state $i$, known as the Ergodic Theorem \cite{norris_book}.
Note that the right hand side of \eqref{ergodic} is an expectation with respect to the invariant distribution $\pi$.
In particular, when $f_i =i$, then we have
\begin{align}
&\lim_{\tau \rightarrow \infty} \frac{1}{\tau} \int_{t=0}^{\tau} X(t) \dd{t} = \sum_{i=1}^K i \pi_i =: 
\mathbb{E} [X],
\label{ergodic2}
\end{align}
where the random variable $X$ has the same limiting distribution in \eqref{limiting}, i.e., $\mathbb{P} [X = j] =  \lim_{t \rightarrow \infty} \mathbb{P} [X(t)=j ] = \pi_j, j \in \mathcal{K}$. $X$ is called the random variable associated with the CTMC $X(t)$ in the steady-state. 

The generator $Q$ has a left eigenvector $\pi$ and right eigenvector $e$, associated with the simple eigenvalue at zero and all other eigenvalues having strictly negative real parts \cite{gallager_book}. $X(t)$ is called a time-reversible CTMC
% We now shift our focus to the special case of time-reversible CTMCs which subsume birth-death CTMCs as a subcase, for which it is not only possible to simplify but also to show the strict concavity of the expression in \eqref{thm1} with respect to the variable $\lambda$. 
if its generator $Q$ satisfies the following detailed balance equations (DBE),
\begin{align}
    \pi_i q_{ij} & = \pi_j q_{ji}, \quad i \neq j. \label{reversible}
\end{align}
The equations \eqref{reversible} are the GBEs for birth-death chains since the corresponding state transitions take place between neighboring states only. Therefore, birth-death chains are time-reversible.
For a time-reversible CTMC $X(t)$, let
\begin{align}
   \Pi = & {\rm diag}\{ \pi_1,\pi_2,\ldots,\pi_{K} \}, \label{Pi} 
\end{align} 
be the diagonal matrix composed of the entries of $\pi$. Also let
\begin{align} 
S & = \Pi^{1/2} Q \Pi^{-1/2}, \label{S}
\end{align} 
which is a symmetric matrix from \eqref{reversible}. Symmetric matrices have real eigenvalues and they are diagonalizable by orthogonal transformations \cite{golub_book}. Therefore, there exists an orthonormal matrix $U$ such that
\begin{align}
    U^T S U & = D = {\rm diag} \{-d_1,-d_2,\ldots,-d_{K -1},0 \},\label{diagonalization0}
\end{align}
where $-d_i$, with $d_i>0$, are the corresponding real eigenvalues of the matrix $S$. Moreover, the matrix defined by $T = \Pi^{-1/2} U$ diagonalizes the original generator $Q$,  
\begin{align}
    T^{-1} Q T & = D.  \label{diagonalization} 
\end{align}
Next, we present several properties on the left and right eigenvectors of time-reversible CTMCs.
Let the $(i,j)$th entries of $T$ and $\tilde{T}=T^{-1}$ be denoted by $t_{ij}$ and $\tilde{t}_{ij}$, respectively. The way the transformation matrix $T$ is defined, the $i$th row of ${\tilde{T}}$, i.e., the $i$th left eigenvector of $Q$, is obtained by post-multiplying by $\Pi$ the transpose of the $i$th column of $T$, i.e., the transpose of the $i$th right eigenvector of $Q$. 
Consequently, $\tilde{t}_{ji} = \pi_i t_{ij}$ for all $i,j$.
Moreover, the row vector $\pi$ is the $K$th row of $\tilde{T}$ and $e$ is the $K$th column of $T$ in \eqref{diagonalization}. 
\color{black}

\section{System Model} \label{section:SystemModel}
We consider the monitoring system in Fig.~\ref{fig:systemmodel} with $N$ continuous-time information sources each of which is a 
finite-state, irreducible CTMC. The CTMC associated with source-$n$ is denoted by $X_n(t)$, $n \in \mathcal{N} = \{1,2,\ldots,N\}$, $t \geq 0$, and $X_n(t) \in \{ 1,2,\ldots,K_n\}$, where $K_n$ is the size of state space for $X_n(t)$. The process $X_n(t)$ has the infinitesimal generator ${Q_n}$ of size $K_n$ with its $(i,j)$th entry denoted by $q_{n,ij}$. 
The steady-state vector satisfies $\bm{\pi}_n {Q_n} =0$ and $\bm{\pi}_n e=1$ with $\bm{\pi}_n = \{ \pi_{n,i} \}$.
The transition rate out of state $i$ is denoted by $\sigma_{n,i} = \sum_{j \neq i} q_{n,ij}$.
The average transition intensity of source-$n$ is denoted by $r_n$, i.e., $r_n = \sum_{i=1}^{K_n} \pi_{n,i} \sigma_{n,i}$ which is the long-term frequency of state transitions for the CTMC $X_n(t)$.
We denote by $r$ the system transition intensity, $r= \sum_{i=1}^N r_n$.

The remote monitor in Fig.~\ref{fig:systemmodel} samples the original process $X_n(t)$ according to a Poisson process with intensity $\lambda_n >0$ in order to maintain an estimate of the instantaneous state of the original information process. In this paper, we propose to use the martingale estimator $\tilde{X}_n(t)$ which is given as $\tilde{X}_n(t) = X_n(t')$, where $t'$ is the latest sampling time before $t$. 
The accuracy of the remote estimator is studied with three information freshness models described below. For the FWE information freshness model, the information is said to be fresh at the remote monitor
only when the original process and its estimate are equal, i.e., $\tilde{X}_n(t) = X_n(t)$, and is otherwise stale. In the FWE model, there is no value at all in a sample, unless the original process and its estimator are synchronized. Hence, the binary freshness process $F_{n,e}(t)$ is defined as $F_{n,e}(t)=1$ when $\tilde{X}_n(t) = X_n(t)$, and zero otherwise. It is clear that the joint process $(\tilde{X}(t),X(t))$ is also an irreducible CTMC as the original CTMC $X(t)$, and using the Ergodic Theorem \cite{norris_book}, the time-average of the freshness process $F_{n,e}(t)$ is given by, 
\begin{align}
\lim_{\tau \rightarrow \infty} \frac{1}{\tau}\int_{t=0} ^{\tau} F_{n,e}(t) \dd{t} := \mathbb{E}[F_{n,e}] &, \label{freshnessDefine}
\end{align}
where $F_{n,e}$ is the random variable associated with the random process $F_{n,e}(t)$ in steady-state, i.e., 
\begin{align}
   \mathbb{P} [F_{n,e}=1] & = \lim_{t \rightarrow \infty} \mathbb{P}[F_{n,e}(t)=1], \\ 
    & = \lim_{t \rightarrow \infty} \mathbb{P} [X(t) = \tilde{X}(t)],
    \end{align}
    and similarly,
    \begin{align}
   \mathbb{P} [F_{n,e}=0] & = \lim_{t \rightarrow \infty} \mathbb{P} [X(t) \neq \tilde{X}(t)].
\end{align}
\color{black}
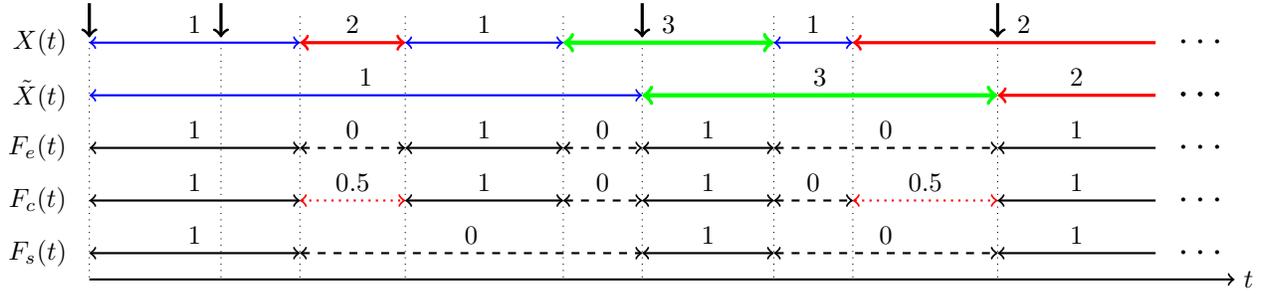
\begin{figure*}[t]
	\centering
	\begin{tikzpicture}[scale=0.35]
		\tikzstyle{note} = [rectangle, dashed, draw, fill=white, font=\footnotesize,
		text width=20em, text centered, rounded corners, minimum height=18em]
        \draw[dotted] (0,17) -- (0,7) ;
        \draw[dotted] (5,17) -- (5,7) ;
        \draw[dotted] (8,17) -- (8,7) ;
        \draw[dotted] (12,17) -- (12,7) ;
        \draw[dotted] (18,17) -- (18,7) ;
        \draw[dotted] (21,17) -- (21,7) ;
        \draw[dotted] (26,17) -- (26,7) ;
        \draw[dotted] (29,17) -- (29,7) ;
        \draw[dotted] (34.5,17) -- (34.5,7) ;
		\draw (-0.5,16) node[anchor=east] {{$X(t)$}} ;
        \draw[thick,blue,<->] (0,16) -- (8,16) ;
        \draw[very thick,red,<->] (8,16) -- (12,16) ;
        \draw[thick,blue,<->] (12,16) -- (18,16) ;
		\draw[ultra thick,green,<->] (18,16) -- (26,16) ;
        \draw[thick,blue,<->] (26,16) -- (29,16) ;
        \draw[very thick,red,<-] (29,16) -- (40.5,16) ;
        \draw (4,16) node[anchor=south] {$1$};	
        \draw (10,16) node[anchor=south] {$2$};	
        \draw (15,16) node[anchor=south] {$1$};	
        \draw (22,16) node[anchor=south] {$3$};	
        \draw (27.5,16) node[anchor=south] {$1$};	
        \draw (35.5,16) node[anchor=south] {$2$};	
        \draw[very thick,->] (0,17.5) -- (0,16.2) ;
        \draw[very thick,->] (5,17.5) -- (5,16.2) ;
        \draw[very thick,->] (21,17.5) -- (21,16.2) ;
        \draw[very thick,->] (34.5,17.5) -- (34.5,16.2) ;
        \draw (41,16) node[anchor=west] {{\Large $\cdots$}};
        \draw (-0.5,14) node[anchor=east] {{$\tilde{X}(t)$}} ;
        \draw[thick,blue,<->] (0,14) -- (21,14) ;
        \draw[ultra thick,green,<->] (21,14) -- (34.5,14) ;
        \draw[very thick,red,<-] (34.5,14) -- (40.5,14) ;
        \draw (10.5,14) node[anchor=south] {$1$};	
        \draw (27.75,14) node[anchor=south] {$3$};	
        \draw (37.5,14) node[anchor=south] {$2$};	
        \draw (41,14) node[anchor=west] {{\Large $\cdots$}};
        \draw (-0.5,12) node[anchor=east] {{$F_e(t)$}} ;
        \draw[thick,black,<->] (0,12) -- (8,12) ;
        \draw[thick,black,dashed,<->] (8,12) -- (12,12) ;
        \draw[thick,black,<->] (12,12) -- (18,12) ;
        \draw[thick,black,dashed,<->] (18,12) -- (21,12) ;
        \draw[thick,black,<->] (21,12) -- (26,12) ;
        \draw[thick,black,dashed,<->] (26,12) -- (34.5,12) ;
        \draw[thick,black,<-] (34.5,12) -- (40.5,12) ;
        \draw (41,14) node[anchor=west] {{\Large $\cdots$}};        
        \draw (4,12) node[anchor=south] {$1$};	
        \draw (10,12) node[anchor=south] {$0$};	
        \draw (15,12) node[anchor=south] {$1$};	
        \draw (19.5,12) node[anchor=south] {$0$};
        \draw (23.5,12) node[anchor=south] {$1$};	
        \draw (30.25,12) node[anchor=south] {$0$};	
        \draw (37.5,12) node[anchor=south] {$1$};	
        \draw (41,12) node[anchor=west] {{\Large $\cdots$}};
        \draw (-0.5,10) node[anchor=east] {{$F_c(t)$}} ;
        \draw[thick,black,<->] (0,10) -- (8,10) ;
        \draw[thick,dotted,red,<->] (8,10) -- (12,10) ;
        \draw[thick,black,<->] (12,10) -- (18,10) ;
        \draw[thick,black,dashed,<->] (18,10) -- (21,10) ;
        \draw[thick,black,<->] (21,10) -- (26,10) ;
        \draw[thick,dashed,<->] (26,10) -- (29,10) ;
        \draw[thick,red,dotted,<->] (29,10) -- (34.5,10) ;
        \draw[thick,black,<-] (34.5,10) -- (40.5,10) ;
        \draw (41,14) node[anchor=west] {{\Large $\cdots$}};        
        \draw (4,10) node[anchor=south] {$1$};	
        \draw (10,10) node[anchor=south] {$0.5$};	
        \draw (15,10) node[anchor=south] {$1$};	
        \draw (19.5,10) node[anchor=south] {$0$};
        \draw (23.5,10) node[anchor=south] {$1$};	
        \draw (27.5,10) node[anchor=south] {$0$};	
        \draw (31.75,10) node[anchor=south] {$0.5$};	
        \draw (37.5,10) node[anchor=south] {$1$};	
        \draw (41,10) node[anchor=west] {{\Large $\cdots$}}; 
        \draw (-0.5,8) node[anchor=east] {{$F_s(t)$}} ;
        \draw[thick,black,<->] (0,8) -- (8,8) ;
        \draw[thick,black,dashed,<->] (8,8) -- (21,8) ;
        \draw[thick,black,<->] (21,8) -- (26,8) ;
        \draw[thick,black,dashed,<->] (26,8) -- (34.5,8) ;
        \draw[thick,black,<-] (34.5,8) -- (40.5,8) ;
        \draw (41,14) node[anchor=west] {{\Large $\cdots$}};
        \draw (4,8) node[anchor=south] {$1$};	
        \draw (14.5,8) node[anchor=south] {$0$};	
        \draw (23.5,8) node[anchor=south] {$1$};	
        \draw (30.25,8) node[anchor=south] {$0$};	
        \draw (37.5,8) node[anchor=south] {$1$};	
        \draw (41,8) node[anchor=west] {{\Large $\cdots$}};
        %\draw[very thick,->] (34.5,5.7) -- (34.5,4.4) ;
       % \draw (35,5) node[anchor=west] {sampling operation};
        \draw[thick,->] (0,7) -- (43.5,7)  ;
        \draw (43.5,7) node[anchor=west] {$t$};
	\end{tikzpicture}
	\caption{Sample paths of the processes $X(t)$, $\tilde{X}(t)$, $F_e(t)$, $F_c(t)$ and $F_s(t)$ for a source with states 1, 2 and 3, for an example scenario. Down arrows represent sampling operation.}
	\label{fig:samplepath}
\end{figure*}	
In the FWC model, we assume that there may be a value when the original process and its estimator are close enough to each other despite being out of sync, from a certain semantic perspective. For this purpose, for FWC, we introduce a proximity matrix ${P_n} = \{p_{n,ij}\}$, $0 \leq p_{n,ij} \leq 1$ for source-$n$, and subsequently define a non-binary freshness process $F_{n,c}(t)$ which takes the value $p_{n,ij}$ when $X_n(t)=i$ and $\tilde{X}_n(t) = j$. In particular, $p_{n,ii}=1$, representing perfect freshness when the original process and its estimator are synchronized. Close to unity values of $p_{n,ij}$ are representative of proximity between the states $i$ and $j$. When ${P_n}$ is taken as the identity matrix, FWC reduces to FWE.  For the FWS model, the binary freshness process $F_{n,s}(t)$ is set whenever the process is sampled, and it stays set until $X_n(t)$ makes a transition at which instant $F_{n,s}(t)$ becomes zero.  The mean freshness for FWC (resp. FWS) is denoted by $\mathbb{E}[F_{n,c}]$ (resp. $\mathbb{E}[F_{n,s}])$ similar to \eqref{freshnessDefine} where 
$F_{n,c}$ (resp. $F_{n,s}$) is the random variable associated with the random process $F_{n,c}(t)$ (resp. $F_{n,s}(t)$) in the steady-state.

% The mean freshness metrics $\mathbb{E}[F_{n,c}(t)]$ and $\mathbb{E}[F_{n,s}(t)]$ for the FWC and FWS models, respectively, are defined similar to \eqref{freshnessDefine}. 
When the index of the source is immaterial, the subscript $n$ is dropped for the source process $X(t)$ and its estimator $\tilde{X}(t)$  along with the corresponding freshness processes $F_e(t)$, $F_c(t)$, and $F_s(t)$ for FWE, FWC, and FWS, respectively, and the proximity matrix ${P}=\{ p_{ij} \}$ for FWC. 
Fig.~\ref{fig:samplepath} depicts the sample paths of the processes $X(t)$, $\tilde{X}(t)$, $F_e(t)$, $F_c(t)$ and $F_s(t)$ for a source with three states $\{1, 2, 3\}$ for an example scenario where the proximity matrix $P$ is chosen such that $p_{ij}=1$ when $|i-j|=0$, $0.5$ when $|i-j|=1$, and zero when $|i-j|=2$. Note that when $F_s(t)=1$, then $F_e(t)=1$, but not otherwise. Therefore, $\mathbb{E}[F_{e}] \geq  \mathbb{E}[F_{s}]$ for any choice of the sampling rate $\lambda$. Moreover, $\mathbb{E}[F_{c}] \geq  \mathbb{E}[F_{e}]$ since $F_c(t)$ can be larger than zero when $F_e(t)=0$, stemming from the structure of the proximity matrix $P$. Thus, it always holds that $\mathbb{E}[F_{c}] \geq  \mathbb{E}[F_{e}] \geq  \mathbb{E}[F_{s}]$.

\section{Analytical Expressions for Mean Freshness} \label{section:analytical}
The subscript $n$ indicating the source index is dropped for convenience in the current section where the mean freshness is derived for a single irreducible CTMC $X(t)$ for the three freshness models.
\color{black}
\subsection{FWE Model}\label{subsection:FWE}
Theorem~\ref{theorem1} provides an expression for  $f(\lambda) = \mathbb{E}[F_e]$ for the FWE model for the CTMC $X(t)$.
\begin{theorem} \label{theorem1}
Let the irreducible CTMC $X(t) \in \{ 1,2,\ldots,K \}$ with generator $Q$ and steady-state vector $\pi$, be Poisson sampled with sampling rate $\lambda$. Then, for the FWE model, the mean freshness $f(\lambda) = \mathbb{E}[F_e]$ is given by,
	\begin{align}
	f(\lambda)	& = \lambda \pi \ {\bm diag}[(\lambda I - Q)^{-1}], \label{thm1}
	\end{align}
where $\bm{diag}[\cdot]$ represents a column vector composed of the diagonal entries of its matrix argument.
\end{theorem}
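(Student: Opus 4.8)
The plan is to reduce the steady-state probability $\mathbb{E}[F_e] = \lim_{t \to \infty} \mathbb{P}[X(t) = \tilde{X}(t)]$ to a comparison between $X(t)$ and its own past value. Because the martingale estimator freezes the last observed state, one has $\tilde{X}(t) = X(t-A)$, where $A$ is the elapsed time since the most recent sampling instant, i.e., the backward recurrence time of the Poisson sampling process observed at $t$. First I would invoke the fact that the sampling process is an independent Poisson process of rate $\lambda$, so that in steady state its backward recurrence time $A$ is exponentially distributed with density $\lambda e^{-\lambda a}$ and, crucially, is independent of the entire sample path of $X(\cdot)$. This independence is precisely what allows conditioning on $\{A = a\}$ without perturbing the law of the CTMC.

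The second step is the conditioning computation. Conditioned on $A = a$, we have $\tilde{X}(t) = X(t-a)$, so $F_e(t) = 1$ exactly when $X(t) = X(t-a)$. Using stationarity of $X(\cdot)$ (equivalently, PASTA at the sampling epoch), the state at the last sample satisfies $\mathbb{P}[X(t-a) = i] = \pi_i$, and by the Markov property the conditional transition probability over a window of length $a$ is the $(i,i)$ entry of the matrix exponential, $\mathbb{P}[X(t) = i \mid X(t-a) = i] = [e^{Qa}]_{ii}$. Summing over the common state $i$ gives
\[
\mathbb{P}[X(t) = \tilde{X}(t) \mid A = a] = \sum_{i=1}^{K} \pi_i \, [e^{Qa}]_{ii}.
\]

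Finally I would average over the age $A$ and evaluate the resulting matrix integral,
\[
\mathbb{E}[F_e] = \int_{0}^{\infty} \lambda e^{-\lambda a} \sum_{i=1}^{K} \pi_i \, [e^{Qa}]_{ii} \, \dd{a} = \lambda \sum_{i=1}^{K} \pi_i \Big[ \int_{0}^{\infty} e^{(Q - \lambda I) a} \, \dd{a} \Big]_{ii}.
\]
The remaining ingredient is the convergence and value of $\int_0^\infty e^{(Q - \lambda I)a}\, \dd{a}$. By the spectral facts recalled in the Preliminaries, every eigenvalue of $Q$ has non-positive real part, so $Q - \lambda I$ is stable (its eigenvalues have real part at most $-\lambda < 0$), whence $\int_0^\infty e^{(Q-\lambda I)a}\,\dd{a} = -(Q - \lambda I)^{-1} = (\lambda I - Q)^{-1}$. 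Substituting yields $\mathbb{E}[F_e] = \lambda \sum_{i} \pi_i \,[(\lambda I - Q)^{-1}]_{ii}$, which is exactly the claimed vector form $\lambda \pi\, {\bm diag}[(\lambda I - Q)^{-1}]$.

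I expect the main obstacle to be the careful justification of the joint-distribution step rather than the algebra: namely, arguing that in steady state the backward recurrence time $A$ is exponential of rate $\lambda$ and independent of $X(\cdot)$, and that conditioning on $A = a$ leaves the state at the sampling epoch $\pi$-distributed and Markovian forward over $[t-a, t]$. An alternative route avoiding recurrence-time arguments would be to write the global balance equations for the product CTMC $(\tilde{X}(t), X(t))$ — which the model guarantees is irreducible — and extract the total stationary mass on the diagonal states $(i,i)$; however, the conditioning argument above is more direct and localizes all the analytic content into the single matrix-exponential integral.
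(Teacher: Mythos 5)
Your argument is correct, and it takes a genuinely different route from the paper. The paper proves the result by treating the pair $(\tilde{X}(t),X(t))$ as a joint finite-state CTMC, writing its global balance equations for the diagonal states $(i,i)$ and off-diagonal states $(i,k)$, assembling them row-by-row into the matrix identity $Y(i,:)\,(Q-\lambda I)=-\lambda \pi_i I(i,:)$ for the joint stationary matrix $Y=\{y_{ij}\}$, and summing the diagonal mass $\sum_i y_{ii}$. You instead condition on the backward recurrence time $A$ of the Poisson sampling process, use its independence from $X(\cdot)$ and its limiting $\mathrm{Exp}(\lambda)$ law, and identify the resolvent $(\lambda I-Q)^{-1}$ as the Laplace transform $\int_0^\infty e^{-\lambda a}e^{Qa}\,\dd{a}$ of the transition kernel; the spectral justification (all nonzero eigenvalues of $Q$ in the open left half-plane) for convergence of that integral is exactly as you state. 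Your route is arguably more illuminating about \emph{why} the resolvent appears, and all the probabilistic ingredients (independence of the sampler, stationarity of $X$ at the sampling epoch, the Markov property over $[t-a,t]$) are correctly invoked; the only mild informality is the appeal to PASTA, which is not really needed since plain stationarity plus independence of the sampling process already gives $X(t-a)\sim\pi$ under the conditioning. What the paper's balance-equation approach buys in exchange is the full joint stationary matrix $Y$, which it reuses verbatim in the FWC corollary to weight the off-diagonal masses $y_{ij}$ by the proximity entries $p_{ji}$; your conditioning argument extends to that case just as easily (replace the diagonal trace by $\sum_{i,j}\pi_j[(\lambda I-Q)^{-1}]_{ji}p_{ij}$), so nothing essential is lost.
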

\begin{proof} Let us consider the two-dimensional random process $Y(t) = (\tilde{X}(t),X(t))$, which is also Markov. To see this, note that, the transition intensity from state $(i,j)$ to $(i,j')$ is $q_{jj'}$ and from state $(i,j)$ to $(j,j)$ for $j \neq i$ is $\lambda$. Let the steady-state vector of the process $Y(t)$ be denoted by $y$, i.e., $y_{ij}  = \lim_{t \rightarrow \infty} \mathbb{P} [ Y(t) = (i,j)]$, $1 \leq i,j \leq K$. Let $Y$ be a $K \times K$ matrix such that $Y = \{ y_{ij} \}$. Applying the GBE for the state $(i,i)$ of $Y(t)$ provides the following equation for each $i$, $1 \leq i \leq K$,
\begin{align}
 y_{ii} \sigma_i & =  \sum_{j \neq i} y_{ij} q_{ji} +  \sum_{j \neq i} y_{ji} \lambda, \\
  & = \sum_{j \neq i} y_{ij} q_{ji}+ \lambda (\pi_i -y_{ii}),  \label{eqn1}
\end{align}
where the last equality stems from the identity $\pi_i= \sum_{j} y_{ji}$. On the other hand, when the GBE is applied for the state $(i,k)$, $k \neq i$, then we obtain the following,
\begin{align}
y_{ik} (\sigma_k + \lambda) = \sum_{j \neq k} y_{ij} q_{jk}, \quad 1 \leq i \leq K, \ k \neq i.\label{eqn2}
\end{align}
Writing the equations \eqref{eqn1} and \eqref{eqn2} in a matrix form, we obtain for each $i$, $1 \leq i \leq K$,
\begin{align}
	Y(i,:) (Q - \lambda I)  & = -\lambda \pi_i  I(i,:) , \label{eqn3}
\end{align}
where $Y(i,:)$ and $I(i,:)$ denote the $i$th row of $Y$ and the $i$th row of the identity matrix, respectively. In FWE, the freshness process $F_e(t)=1$ when the joint process $Y(t)$ is visiting state $(i,i)$ (in the steady-state) for some state $i$ and $F_e(t)=0$ otherwise. Therefore, 
\(
\mathbb{E}[F_e]  = \sum_{i=1}^K y_{ii},
\)
which yields the identity \eqref{thm1}.
\end{proof}
The following corollary gives a simplified expression for mean freshness $f(\lambda)$ in terms of the sum of first-order rational functions of the variable $\lambda$ for time-reversible CTMCs on the basis of Theorem~\ref{theorem1}.
\begin{corollary} \label{theorem2} 
	Consider the process $X(t)$ of Theorem~\ref{theorem1} with generator $Q$ which is time-reversible and with diagonalizing transformation matrix $T$ as given in \eqref{diagonalization}. Then, the mean freshness $f(\lambda)=\mathbb{E}[F_e]$ is given for the FWE model by 
	\begin{align}
 	f(\lambda) & = \sum_{j=1}^{K} \pi_j^2 + \lambda \sum_{j=1}^{K-1} \frac{b_j}{\lambda + d_j}, \label{thm21} \\
 	& = 1 - \sum_{j=1}^{K-1} \frac{a_j}{\lambda + d_j}, \label{thm22}
	\end{align} 
where $a_j$ and $b_j$, for $1\leq j \leq K-1$, are given by
\begin{align}
b_j & = \sum_{i=1}^{K} {\pi}_i^2 t_{ij}^2 = \sum_{i=1}^{K} {\tilde{t}_{ji}^2}, \quad a_j=  b_j d_j. \label{nail78}
\end{align}
Moreover, $f(\lambda)$ is increasing and strictly concave, and has a continuous derivative $f^{\prime}(\lambda)$ with $\lim_{\lambda \rightarrow 0^+}f(\lambda) = \sum_{j=1}^{K} \pi_j^2$.
\end{corollary}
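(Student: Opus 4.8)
The plan is to start from the closed form in Theorem~\ref{theorem1}, namely $f(\lambda)=\lambda\,\pi\,\bm{diag}[(\lambda I - Q)^{-1}]$, and to diagonalize the resolvent using the time-reversibility structure. Since $T^{-1}QT=D$ from \eqref{diagonalization}, I would write $(\lambda I - Q)^{-1}=T(\lambda I - D)^{-1}T^{-1}$, where $(\lambda I - D)^{-1}$ is diagonal with entries $1/(\lambda+d_j)$ for $1\leq j\leq K-1$ and $1/\lambda$ for the zero eigenvalue (indexed $j=K$ with the convention $d_K:=0$). Extracting the $(i,i)$ entry then gives $[(\lambda I - Q)^{-1}]_{ii}=\sum_{j=1}^{K} t_{ij}\,\tilde{t}_{ji}/(\lambda+d_j)$, since the middle factor is diagonal.

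The next step is to invoke the eigenvector identity $\tilde{t}_{ji}=\pi_i t_{ij}$ recorded in the Preliminaries, which converts the summand into $\pi_i t_{ij}^2/(\lambda+d_j)$. Substituting into the expression from Theorem~\ref{theorem1} and interchanging the order of summation yields $f(\lambda)=\lambda\sum_{j=1}^{K} b_j/(\lambda+d_j)$ with $b_j=\sum_{i=1}^{K}\pi_i^2 t_{ij}^2$, matching \eqref{nail78}; the alternative form $b_j=\sum_i \tilde{t}_{ji}^2$ follows from the same identity. To reach \eqref{thm21} I would isolate the $j=K$ term: since $e$ is the $K$th column of $T$, we have $t_{iK}=1$, so $b_K=\sum_i \pi_i^2$ and its contribution $\lambda b_K/\lambda=\sum_i\pi_i^2$ is exactly the constant term. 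For \eqref{thm22}, I use $\lambda/(\lambda+d_j)=1-d_j/(\lambda+d_j)$ to rewrite each remaining term as $b_j-a_j/(\lambda+d_j)$ with $a_j=b_j d_j$, and then collapse the leftover constants.

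The step I expect to require the most care is establishing that the collapsed constant equals one, i.e.\ $\sum_{j=1}^{K} b_j=1$, which is what makes the leading constant in \eqref{thm22} equal to unity. Here I would use $T=\Pi^{-1/2}U$ from \eqref{S}--\eqref{diagonalization} with $U$ orthonormal, so that $\sum_{j} t_{ij}^2=\pi_i^{-1}\sum_j u_{ij}^2=\pi_i^{-1}$ because the rows of $U$ have unit norm; weighting by $\pi_i^2$ and summing over $i$ gives $\sum_i\pi_i=1$. Finally, the analytic properties follow by differentiating \eqref{thm22}: $f'(\lambda)=\sum_{j=1}^{K-1}a_j/(\lambda+d_j)^2$ and $f''(\lambda)=-\sum_{j=1}^{K-1}2a_j/(\lambda+d_j)^3$. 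Since $d_j>0$, and each $b_j>0$ because every column of the invertible matrix $T$ is nonzero while $\pi_i>0$ by irreducibility, we have $a_j>0$; hence $f'>0$ and $f''<0$ on $\lambda>0$, giving strict monotonicity, strict concavity, and continuity of $f'$, while the limit $\lim_{\lambda\to 0^+}f(\lambda)=\sum_j\pi_j^2$ is immediate from either \eqref{thm21} or \eqref{thm22}.
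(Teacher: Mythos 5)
Your proposal is correct and follows essentially the same route as the paper: diagonalize the resolvent via $T$, use $\tilde{t}_{ji}=\pi_i t_{ij}$ to identify the $b_j$, isolate the zero-eigenvalue term with $t_{iK}=1$, and argue positivity of the $a_j$ from $\pi_i>0$ and the nonvanishing columns of $T$. The only (minor) difference is how the constant in \eqref{thm22} is pinned to one: the paper invokes the probabilistic fact $\lim_{\lambda\to\infty}f(\lambda)=1$, whereas you prove $\sum_{j=1}^{K}b_j=1$ algebraically from the orthonormality of $U$ — both are valid, and yours is slightly more self-contained.
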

\begin{proof}
Using the diagonalization equation \eqref{diagonalization}, we first write
the term $A=(\lambda I - Q)^{-1}= T (\lambda I-D)^{-1}T$ appearing in \eqref{thm1} as follows,
\begin{align}
 A &= T
 \mqty(\dmat{\frac{1}{\lambda+d_1},\ddots,\frac{1}{\lambda + d_{K-1}},\frac{1}{\lambda} })  T^{-1}. \label{temp1}
\end{align}
Using \eqref{thm1} and \eqref{temp1}, we have
\begin{align}
\frac{f(\lambda)}{\lambda} & = \sum_{i=1}^{K} {\pi}_i A_{ii}, \\
& =   \sum_{i=1}^{K} {\pi}_i \sum_{j=1}^{K} t_{ij} \tilde{t}_{ji} \frac{1}{\lambda + d_j}, \\
& = \sum_{j=1}^{K-1} \underbrace{\sum_{i=1}^{K} {\pi}_i^2  t_{ij}^2 }_{b_j} \frac{1}{\lambda + d_j} + \frac{1}{\lambda} \sum_{i=1}^{K} \pi_i^2, \\
& = \sum_{j=1}^{K-1} \frac{b_j}{\lambda + d_j} + \frac{1}{\lambda} \sum_{i=1}^{K} \pi_i^2, \label{temp2}
\end{align}
since $\tilde{t}_{ji} = \pi_i t_{ij}$ and also
$\sum_{i=1}^{K} {\pi}_i  t_{i K} \tilde{t}_{K i} = \sum_{i=1}^{K} \pi_i^2$ by observing that $t_{i K} =1$ and $\tilde{t}_{K i} = {\pi}_i$. The result in \eqref{temp2} gives the desired expression in \eqref{thm21}. Then, \eqref{thm22} follows directly from \eqref{temp2} and also the fact that $\lim_{\lambda \rightarrow \infty} f(\lambda) =1$.  
Moreover, the coefficients $a_j$ and $b_j$ are strictly positive since $\pi_i>0$ and the entries of a column of $T$ cannot be all zero. A first-order rational function of $\lambda$ in the form $-a/(\lambda + d)$ is increasing and strictly concave for $a,d>0$ and sums of concave functions are also increasing and strictly concave, completing the proof. The expression  pertaining to $\lim_{\lambda \rightarrow 0^+}f(\lambda)$ immediately follows from \eqref{thm21}.
\end{proof}
For the special case $K=2$, $X(t)$ is a two-state time-reversible CTMC with generator $Q$ and the diagonal matrix $\Pi$ given as follows \eqref{Pi}, 
\begin{align}
Q=
\begin{pmatrix}
 -\alpha & \alpha \\
 \beta & -\beta
\end{pmatrix},
\quad
\Pi=
\begin{pmatrix}
 \frac{\beta}{\alpha + \beta} & 0 \\
 0 & \frac{\alpha}{\alpha + \beta}
\end{pmatrix}.
\end{align}
The generator $Q$ has two eigenvalues: the first eigenvalue being $-d_1$ where $d_1 = \alpha +\beta$ and the second one at the origin. Consequently, we obtain the symmetric matrix $S$ according to \eqref{S} and the orthogonal transformation matrix $U$ from \eqref{diagonalization0},
\begin{align}
S=
\begin{pmatrix}
 -\beta & \sqrt{\alpha\beta} \\
 \sqrt{\alpha\beta}  & -\beta
 \end{pmatrix}, \quad
 U = 
 \begin{pmatrix}
 \sqrt{\frac{\alpha}{\alpha+\beta}} & \sqrt{\frac{\beta}{\alpha+\beta}} \\
 \sqrt{\frac{\beta}{\alpha+\beta}} & \sqrt{\frac{\alpha}{\alpha+\beta}}
\end{pmatrix},
\end{align}
which gives rise to the matrix $\tilde{T}$ which is the inverse of the diagonalizing transformation matrix $T$ (see \eqref{diagonalization}),
\begin{align}
\tilde{T}=
 \begin{pmatrix}
 {\frac{\alpha\beta}{(\alpha+\beta)^2}} & {\frac{\alpha\beta}{(\alpha+\beta)^2}} \\
 {\frac{\beta}{\alpha+\beta}} & {\frac{\alpha}{\alpha+\beta}}
\end{pmatrix}.
\end{align}
From \eqref{thm22}, the following closed-form solution exists for mean freshness,
\begin{align}
f(\lambda) 	& = 1 - \frac{a_1}{\lambda+d_1},
\label{staleness}
\end{align}
where $a_1 =\frac{2 \alpha \beta}{(\alpha + \beta)}$ since $a_1 = b_1 d_1$ and $b_1$ is the sum of the squares of the entries of the first row of $\tilde{T}$ which is evident from the equations \eqref{thm22} and \eqref{nail78}. We note that expressions for expected staleness for this limited special case of 2-state sources have been obtained in \cite{bastopcu_ulukus_entropy22} using a different method.

\subsection{FWC Model} \label{subsection:FWC}
Corollary~\ref{theorem3} provides an expression for $f(\lambda)=\mathbb{E}[F_c]$ for the FWC model for time-reversible CTMCs on the basis of Theorem~\ref{theorem1}.
\begin{corollary} \label{theorem3} 
	Consider the process $X(t)$ of Theorem~\ref{theorem1} with generator $Q$ which is time-reversible and with diagonalizing transformation matrix $T$ as given in \eqref{diagonalization}. Then, the mean freshness $f(\lambda)=\mathbb{E}[F_c]$ with proximity matrix $P$ is given for the FWC model by 
	\begin{align}
 	f(\lambda) & = \pi \sum_{i=1}^{K} \pi_i P(:,i) +  \sum_{j=1}^{K-1} \frac{b_j \lambda}{\lambda + d_j}, \label{thm31} \\
 	& = 1 - \sum_{j=1}^{K-1} \frac{a_j}{\lambda + d_j}, \label{thm32}
	\end{align} 
    where $a_j$ and $b_j$, for $1\leq j \leq K-1$, are given by
    \begin{align}
    b_j & = \tilde{T}(j,:) \sum_{i=1}^{K} {\pi}_i t_{ij}  P(:,i), \quad a_j=  b_j d_j, 
    \end{align}
    and $P(:,i)$ denotes the $i$th column of $P$, and $\lim_{\lambda \rightarrow 0^+}f(\lambda) =\pi \sum_{i=1}^{K} \pi_i P(:,i)$.
\end{corollary}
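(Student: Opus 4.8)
The plan is to reuse the joint-chain machinery from the proof of Theorem~\ref{theorem1} verbatim, since the only thing that changes between FWE and FWC is the \emph{weight} assigned to each joint state rather than the chain itself. As established there, $Y(t)=(\tilde X(t),X(t))$ is an irreducible CTMC whose stationary matrix $Y=\{y_{ij}\}$ satisfies \eqref{eqn3}, i.e. $Y(i,:)(Q-\lambda I)=-\lambda\pi_i\,I(i,:)$ for each $i$, so that $Y(i,:)=\lambda\pi_i\,I(i,:)(\lambda I-Q)^{-1}$. The first key step is to write the FWC mean freshness as the correct weighted sum of the $y_{ij}$. Because $Y(t)=(i,j)$ means $\tilde X(t)=i$ and $X(t)=j$, while $F_c(t)=p_{X,\tilde X}=p_{ji}$ in that state, I get $\mathbb{E}[F_c]=\sum_{i,j}y_{ij}p_{ji}=\sum_{i=1}^{K}Y(i,:)\,P(:,i)$, the inner sum over $j$ collapsing into the product of the $i$th row of $Y$ with the $i$th \emph{column} of $P$. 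Substituting the expression for $Y(i,:)$ then yields $f(\lambda)=\lambda\sum_{i=1}^{K}\pi_i\,A(i,:)\,P(:,i)$ with $A=(\lambda I-Q)^{-1}$, exactly the FWE formula \eqref{thm1} with the $i$th unit column replaced by $P(:,i)$.

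Next I would diagonalize using \eqref{diagonalization} and \eqref{temp1}, writing the $i$th row of $A$ as $A(i,:)=\sum_{j=1}^{K}t_{ij}\,(\lambda+d_j)^{-1}\,\tilde T(j,:)$ (with $d_K=0$), so that $A(i,:)P(:,i)=\sum_{j=1}^{K}t_{ij}(\lambda+d_j)^{-1}\,\tilde T(j,:)P(:,i)$. Inserting this and separating the $j=K$ (zero-eigenvalue) term is the crux: there the factor $\lambda$ cancels the $1/\lambda$, and using $t_{iK}=1$ and $\tilde T(K,:)=\pi$ the $j=K$ contribution collapses to $\sum_{i}\pi_i\,\pi\,P(:,i)=\pi\sum_{i}\pi_iP(:,i)$, producing the constant term of \eqref{thm31}. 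For the remaining terms $1\le j\le K-1$, I swap the order of summation and pull the $i$-independent row vector $\tilde T(j,:)$ out of the sum over $i$, leaving the coefficient $\tilde T(j,:)\sum_{i}\pi_i t_{ij}P(:,i)=b_j$; this matches the stated definition of $b_j$ and gives the rational sum $\sum_{j=1}^{K-1}b_j\lambda/(\lambda+d_j)$, establishing \eqref{thm31}.

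To obtain \eqref{thm32}, I would apply the elementary partial-fraction identity $b_j\lambda/(\lambda+d_j)=b_j-a_j/(\lambda+d_j)$ with $a_j=b_jd_j$ to each summand, converting \eqref{thm31} into $\bigl(\pi\sum_i\pi_iP(:,i)+\sum_{j=1}^{K-1}b_j\bigr)-\sum_{j=1}^{K-1}a_j/(\lambda+d_j)$. It then remains to show the bracketed constant equals $1$, which I would argue from the limit $\lim_{\lambda\to\infty}f(\lambda)=1$: as the sampling rate grows without bound the estimate tracks the source perfectly, so $F_c(t)\to p_{ii}=1$ almost everywhere. The limit $\lim_{\lambda\to0^+}f(\lambda)=\pi\sum_i\pi_iP(:,i)$ then falls out of \eqref{thm31} since the rational terms vanish. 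The main obstacle is purely bookkeeping rather than conceptual: because $Y$ orders the estimate first and the source second while $F_c$ reads $p_{X,\tilde X}$, one must be careful that the proximity matrix enters transposed, so that it is the $i$th \emph{column} $P(:,i)$ (not row) that pairs with $Y(i,:)$; getting this orientation right is exactly what makes the clean factorization $b_j=\tilde T(j,:)\sum_i\pi_i t_{ij}P(:,i)$ emerge, and it specializes correctly to Corollary~\ref{theorem2} when $P=I$.
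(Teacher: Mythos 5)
Your proposal is correct and follows essentially the same route as the paper's own proof: expressing $\mathbb{E}[F_c]=\sum_{i,j}y_{ij}p_{ji}$, invoking \eqref{eqn3} to get $f(\lambda)/\lambda=\sum_i\pi_i A(i,:)P(:,i)$, diagonalizing, isolating the zero-eigenvalue term via $t_{iK}=1$, $\tilde t_{Ki}=\pi_i$, and passing to \eqref{thm32} via $\lim_{\lambda\to\infty}f(\lambda)=1$. Your explicit care with the transposed orientation of $P$ and the partial-fraction step match the paper's argument; no gaps.
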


\begin{proof}
Recalling the definition of $y_{ij}$, we write 
\begin{align}
f(\lambda) & = \sum_{i=1}^K  \sum_{j=1}^K   y_{ij} p_{ji}.
\end{align} 
Recalling the definition of matrix $A$, we first obtain the following identity from \eqref{eqn3},
\begin{align}
\frac{f(\lambda)}{\lambda} & = \sum_{i=1}^{K} {\pi}_i A(i,:) P(:,i).
\end{align}
Consequently, 
\begin{align}
 \frac{f(\lambda)}{\lambda}& =   \sum_{i=1}^{K} {\pi}_i \left( \sum_{j=1}^{K} t_{ij} \tilde{T}(j,:) \frac{1}{\lambda + d_j} \right)  P(:,i), 
\end{align}
which is equal to the following expression,
\begin{align} & \sum_{j=1}^{K-1} \underbrace{ \tilde{T}(j,:) \sum_{i=1}^{K}  {\pi}_i t_{ij}  P(:,i)  }_{b_j} \frac{1}{\lambda + d_j} + \pi \sum_{i=1}^{K} 
 \frac{\pi_i P(:,i)}{\lambda}, \label{temp8}
\end{align}
since $t_{i K} =1$ and $\tilde{t}_{K i} = {\pi}_i$, giving the desired expression in \eqref{thm31}. Then, \eqref{thm32} follows directly from \eqref{temp8} and also from $\lim_{\lambda \rightarrow \infty} f(\lambda) =1$. However, the coefficients $b_j$ (and hence $a_j$) are not guaranteed to be non-negative and some of these coefficients may indeed be negative.
\end{proof}

\begin{remark}
Although FWE is a sub-case of FWC, we presented the results for FWE separately since in this case the expressions are slightly simpler and the coefficients $a_j$'s are shown to be non-negative ensuring concavity of the expression \eqref{thm22}.
\end{remark}

\subsection{FWS Model} \label{subsection:FWS}
Theorem~\ref{theorem4} provides an expression for $f(\lambda)=\mathbb{E}[F_s]$ for the FWS model for the CTMC $X(t)$.
\begin{theorem} \label{theorem4}
    Let the irreducible CTMC $X(t) \in \{ 1,2,\ldots,K \}$ with generator $Q$ and steady-state vector $\pi$, be Poisson sampled with sampling rate $\lambda$. Then, for the FWS model, the mean freshness $f(\lambda)=\mathbb{E}[F_s]$ is given by,
	\begin{align}
		f(\lambda) & = 1 - \sum_{i=1}^{K} \frac{a_i}{\lambda + \sigma_i}, \label{thm4}
	\end{align} 
    where $a_i = \pi_i \sigma_i$. Moreover, $f(\lambda)$ is increasing and strictly concave, and has a continuous derivative $f^{\prime}(\lambda)$ with $\lim_{\lambda \rightarrow 0^+}f(\lambda) =0$.
\end{theorem}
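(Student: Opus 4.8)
The plan is to augment the source CTMC with a binary freshness bit and read the answer off the stationary distribution of the resulting finite CTMC, in direct analogy with the proof of Theorem~\ref{theorem1}. I would define the joint process $Z(t)=(X(t),B(t))$ on the state space $\{1,\dots,K\}\times\{0,1\}$, where $B(t)=F_s(t)$ is the FWS freshness bit, and first argue that $Z(t)$ is Markov by specifying its rates: in any state $(i,b)$ a sample occurs at rate $\lambda$ and sends the process to $(i,1)$ (a self-loop when $b=1$), while a state change of $X$ from $i$ to $j\neq i$ occurs at rate $q_{ij}$ and sends the process to $(j,0)$, since by definition any transition of $X$ resets freshness to zero regardless of its previous value. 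The key structural observation is that every $X$-transition lands in a state with $b=0$, so the only way to enter a fresh state $(i,1)$ is through a sample taken from $(i,0)$.

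Next I would write the global balance equation for each fresh state $(i,1)$. The total outflow rate from $(i,1)$, excluding the sampling self-loop, is $\sum_{j\neq i}q_{ij}=\sigma_i$, and the only inflow is from $(i,0)$ at rate $\lambda$; hence $\sigma_i\,p_{i,1}=\lambda\,p_{i,0}$, where $p_{i,b}=\lim_{t\to\infty}\mathbb{P}[Z(t)=(i,b)]$. Because the $X$-marginal of $Z(t)$ evolves exactly as the original CTMC, its stationary law is unchanged, giving $p_{i,0}+p_{i,1}=\pi_i$. Solving these two relations yields $p_{i,1}=\pi_i\,\lambda/(\lambda+\sigma_i)$. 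Since $F_s(t)=1$ precisely on the fresh states, $\mathbb{E}[F_s]=\sum_{i=1}^K p_{i,1}=\sum_{i=1}^K \pi_i\,\lambda/(\lambda+\sigma_i)$, and rewriting $\lambda/(\lambda+\sigma_i)=1-\sigma_i/(\lambda+\sigma_i)$ together with $\sum_i\pi_i=1$ gives the claimed form \eqref{thm4} with $a_i=\pi_i\sigma_i$.

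For the analytic properties I would note that irreducibility of a finite-state CTMC forbids absorbing states, so $\sigma_i>0$ and hence $a_i=\pi_i\sigma_i>0$ for every $i$. Each summand $-a_i/(\lambda+\sigma_i)$ then has derivative $a_i/(\lambda+\sigma_i)^2>0$ and second derivative $-2a_i/(\lambda+\sigma_i)^3<0$ on $\lambda>0$, so it is increasing, strictly concave, and continuously differentiable; finite sums inherit all three properties, exactly as in the proof of Corollary~\ref{theorem2}. The boundary value is immediate: $\lim_{\lambda\to0^+}f(\lambda)=1-\sum_i a_i/\sigma_i=1-\sum_i\pi_i=0$.

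The computation is short, and the only place requiring care — what I would treat as the main obstacle — is correctly specifying the transition structure of $Z(t)$, in particular recognizing that a sample taken while already fresh is a self-loop that must be excluded from the balance of $(i,1)$, and that a state change resets freshness from either value so that all $X$-transitions feed the $b=0$ layer. As an independent check I would re-derive the same expression by a renewal-reward argument over the sojourns of $X$ in state $i$: within a sojourn of length $L\sim\mathrm{Exp}(\sigma_i)$ the bit is $0$ until the first sample at time $T\sim\mathrm{Exp}(\lambda)$ and $1$ thereafter, so the expected fresh time per sojourn is $\mathbb{E}[(L-T)^+]=\lambda/\big(\sigma_i(\lambda+\sigma_i)\big)$; dividing by $\mathbb{E}[L]=1/\sigma_i$ and weighting by $\pi_i$ reproduces $\pi_i\,\lambda/(\lambda+\sigma_i)$ and hence \eqref{thm4}.
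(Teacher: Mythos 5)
Your proof is correct and follows essentially the same route as the paper: both analyze the joint CTMC of the source and the freshness bit and extract $\mathbb{E}[F_s]$ from its stationary distribution via the fresh-state balance $\sigma_i\,p_{i,1}=\lambda\,p_{i,0}$. The only difference is cosmetic --- the paper guesses the stationary vector and verifies every global balance equation, whereas you derive it from that balance together with the marginal identity $p_{i,0}+p_{i,1}=\pi_i$, with your renewal-reward computation serving as an independent confirmation.
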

\begin{proof}
Consider the two-dimensional process $Z(t) = (F_s(t),X(t))$, which is Markov. To see this, in FWS, the transition intensity from states $(1,j)$ and $(0,j)$ to the states $(1,j')$ and $(0,j')$, respectively, is $q_{j j'}$. On the other hand, the transition intensity from  state $(0,j)$ to $(1,j)$ is $\lambda$.
Let the steady-state solution of the process $Z(t)$ be denoted by $z$, i.e., $z_{ij}  = \lim_{t \rightarrow \infty} \mathbb{P}[ F_s(t) = i,X(t)=j],
0 \leq i \leq 1, 1 \leq j \leq K$. We show that the following choice of $z$
\begin{align}
	z_{0i} & =  \frac{\pi_i \sigma_i}{\lambda + \sigma_i}, \quad
	z_{1i} =  \frac{\pi_i \lambda}{\lambda + \sigma_i}, \label{temp21}
\end{align} 
satisfies the following GBE for the states $(1,i)$, $1 \leq i \leq K$,
\begin{align*}
z_{1i} \sigma_i & = z_{0i} \lambda,	
\end{align*}
which is a direct result of \eqref{temp21}.
In order to show that $z$ defined as in \eqref{temp21} satisfies the GBE for the states
$(0,i)$, $1 \leq i \leq K$, we write from \eqref{temp21},
\begin{align}
	z_{0i} (\sigma_i + \lambda) & = 	\pi_i \sigma_i = \sum_{j \neq i} \pi_j q_{ji}, \\
%	& = \sum_{j \neq i} \pi_j q_{ji}, \\
	& = \sum_{j \neq i} \left( \frac{\lambda}{\lambda + \sigma_j} + \frac{\sigma_j}{\lambda + \sigma_j} \right) \pi_j  q_{ji}, \\
	& = \sum_{j \neq i} (z_{0j} + 	z_{1j}) q_{ji}.
\end{align}	
Moreover, $\sum_{i=0}^1 \sum_{j=1}^{K} z_{ij} =1 $, and therefore, $z$ as given in \eqref{temp1} is the steady-state solution for the CTMC $Z(t)$. The mean freshness is finally expressed as, 
\(
	f(\lambda)  = 1 -  \sum_{j=1}^{K} z_{0j},
\)
which yields \eqref{thm4}. Since the form of expression is the same as in the FWE model for time-reversible CTMCs, $f(\lambda)$ is an increasing and strictly concave function of $\lambda$. Moreover, $\lim_{\lambda \rightarrow 0^+}f(\lambda) = 1-\sum_i \pi_i=0$ from \eqref{thm4}.
\end{proof}	

\section{Optimum Monitoring of Heterogeneous CTMCs} \label{section:Optimization}
The monitor is resource-constrained, and therefore, there is a constraint $\Lambda$ on the overall sampling rate of the monitor, i.e., $\lambda=\sum_{n=1}^N \lambda_n \leq \Lambda$. 
Let us first focus our attention to the FWE freshness model for time-reversible CTMCs in which case we use the mean freshness metric $f_n(\lambda_n) = \mathbb{E}[F_{n,e}]$ for source-$n$, and the weighted sum freshness (or the system freshness) $F_S = \sum_{n=1}^{N} w_n f_n(\lambda_n)$, for the overall monitoring system where the normalized weights $w_n, n=1,\ldots,N, \; \sum_n w_n =1,$ reflect the relative importance of the freshness of the information processes. Thus, we have the following optimization problem for weighted sum freshness maximization,
\begin{maxi}
	{\lambda_n \geq 0}{\sum_{n=1}^{N} w_n f_n(\lambda_n) = 1 - \sum_{n=1}^{N} \sum_{j=1}^{K_n - 1} \frac{w_n a_{n,j}}{\lambda_n + d_{n,j}}} 
	{\label{Optimization1}}
    {}
	\addConstraint{ \sum_{n=1}^{N} \lambda_n}{\leq \Lambda}
\end{maxi}
In \eqref{Optimization1}, the coefficients $a_{n,j},d_{n,j} >0$, for $1 \leq j \leq K_n - 1$ are to be obtained for the CTMC $X_n(t)$ using the procedure described in Corollary~\ref{theorem2} and the expression \eqref{thm22}. The function $f_n(\lambda_n)$ is increasing and strictly concave, and has a continuous first order derivative $f_n^{\prime}(\lambda_n)$ that monotonically decreases from the value $\infty$ at $\lambda_n=-d_n^ {\ast}$ to zero as $\lambda_n$ is increased to $\infty$, where $d_n^ {\ast} = \min_j d_{n,j}$. This optimization problem is known to have a water-filling solution \cite{WF_Poor} on the basis of which Algorithm~\ref{alg:Water-filling} provides an efficient solution to the optimization problem \eqref{Optimization1} which requires at most $N-1$ iterations until termination. Step~2 of Algorithm~\ref{alg:Water-filling} can be solved by using the two-dimensional bisection search algorithm detailed in \cite{WF_Poor}. 

The algorithm is outlined as follows. Initially, $I_n =1$ for $n=1,\ldots,N$. Then, for a given $\mu > 0$ and for each $n$ such that $I_n=1$, we iteratively find the value of $\lambda_n \in (-d_n^ {\ast},\infty)$ that satisfies $w_n f_n^{\prime} (\lambda_n) = \mu$ using an inner bisection search algorithm. Once $\lambda_n$'s are obtained, we check whether $\sum_{n=1}^{N}\lambda_n I_n < \Lambda$ or not, and we vary the value of $\mu$ according to an outer bisection search algorithm, which iteratively finds the value of $\mu$ such that  $\sum_{n=1}^{N}\lambda_n I_n = \Lambda$. If $\lambda_n \leq 0$ at this step, then $I_n$ and $\lambda_n$ are set to zero for all such $n$ and the procedure above is repeated.

For the special case of two-state CTMCs, i.e., $K_n = 2$, a closed-form solution is available for the solution of the equations in Step~2 since the inverse function of $f_n^{\prime}(\cdot)$ can be written in closed form. In this case, it is not difficult to show that the choices of $\mu$ and $\lambda_n$ for sources with $I_n=1$, 
\begin{align} 
	\mu = \left(\frac{\sum_{n=1}^N \sqrt{w_{n,1} a_{n,1} I_n}}{\Lambda + \sum_{n=1}^N d_{n,1} I_n} \right)^2, \;  \lambda_n  = \sqrt{\frac{w_n a_{n,1}}{\mu}} - d_{n,1},  \label{nail2}
\end{align}  
provide a single-shot solution for Step~2 of Algorithm~\ref{alg:Water-filling} without a requirement for bisection search for this step. 

We note that, for the FWS model, Algorithm~\ref{alg:Water-filling} can be used with the only difference being the upper limit of the inner summation changed to $K_n$ in \eqref{Optimization1}. 
For the FWC model, since the coefficients $a_i$'s in \eqref{thm32} can be negative, concavity of the freshness function is not proven. However, we propose to use the same water-filling algorithm also for the FWC model based on the observation that the expression \eqref{thm32} turned out to be concave in all the examples we studied.

\begin{algorithm}[bt]
	\caption{Water-filling algorithm for the optimization problem \eqref{Optimization1} based on \cite{WF_Poor}.}
	\begin{algorithmic}
		\STATE \textbf{Step 1}: Initialize $I_n =1$ for $n=1,\ldots,N$.
		\STATE \textbf{Step 2}: Solve the following equations for the water level $\mu > 0$ and $\lambda_n$ when $I_n=1$,
		\begin{align}
			 w_n f_n^{\prime} (\lambda_n) = \mu, \;  \sum_{n=1}^{N}\lambda_n I_n = \Lambda.  \label{eqn:wf}
		\end{align}
		\STATE \textbf{Step 3}: If $\lambda_n > 0$ for all $n$ such that $I_n=1$, then terminate while returning $\lambda_n$'s. 
		\STATE \textbf{Step 4}: Otherwise, set $I_n=0$ and $\lambda_n=0$ for all $n$ such that $I_n=1$ and $\lambda_n \leq 0$, and go to Step~2.
	\end{algorithmic}
	\label{alg:Water-filling}
\end{algorithm}

\section{Numerical Examples} \label{section:Examples}
The first numerical example is presented to validate the analytical expressions obtained for mean freshness in corollaries~\ref{theorem2} and \ref{theorem3}, and Theorem~\ref{theorem4}. A time-reversible birth-death CTMC with three states is considered with generator
\begin{align}
Q= \begin{pmatrix}
    -1.95  &  1.95   &    0 \\
    1  & -2.95 &  1.95 \\
         0  &  2  & -2 
\end{pmatrix}.
\end{align}
For the FWC model, we use the proximity matrix of Fig.~\ref{fig:samplepath}.
The mean freshness is first obtained in Fig.~\ref{Validation} as  a function of $\lambda$ for the three freshness models FWE, FWC, and FWS, using the analytical expressions in \eqref{thm22}, \eqref{thm32}, and \eqref{thm4}, respectively, which is termed as the analytical (A) method. Note that in the analytical method, the expressions are first obtained once for the CTMC which are then used to obtain the metrics for any given sampling rate $\lambda$. On the other hand, the same freshness metrics can also be obtained by numerically solving the two-dimensional Markov chains (used in the proofs of theorems~\ref{theorem1} and \ref{theorem4}) constructed for each given $\lambda$. The results match perfectly, validating the analytical method. 

In the second numerical example, we focus on FWE and FWS in a scenario of $N=50$ heterogeneous two-state Markov chains with $\pi_{n,1}=0.3$, $\pi_{n,2}=0.7$ and linearly spaced transition intensities, i.e., $r_n = r_{n-1} + \delta$, $n=2,\ldots,N$. In the numerical example, we set $r_{1}=0.01$ and the average transition intensity $\frac{1}{N} r = \frac{1}{N} \sum_{n=1}^N r_n$ is set to 10 which yields the choice of $\delta=0.4078$.  We denote by $\kappa$ the ratio of the overall sampling rate $\Lambda$ to the system transition intensity, i.e., $\kappa = \frac{\Lambda}{r}$. Obviously, the sampling ratio $\kappa$ should be sufficiently large so as to keep the remote estimates of all the information sources fresh. The source weights are assumed to be the same with $w_n=\frac{1}{N}$. Algorithm~\ref{alg:Water-filling} is used for both FWE and FWS models, whereas for FWE, \eqref{nail2} is employed for Step~2 of the algorithm to obtain the optimum sampling rates $\lambda_n$'s under an overall sampling rate constraint $\Lambda$ which is chosen to attain a given sampling ratio $\kappa$. 
\begin{figure}[tb]
	\centering
	\includegraphics[width=0.75\columnwidth]{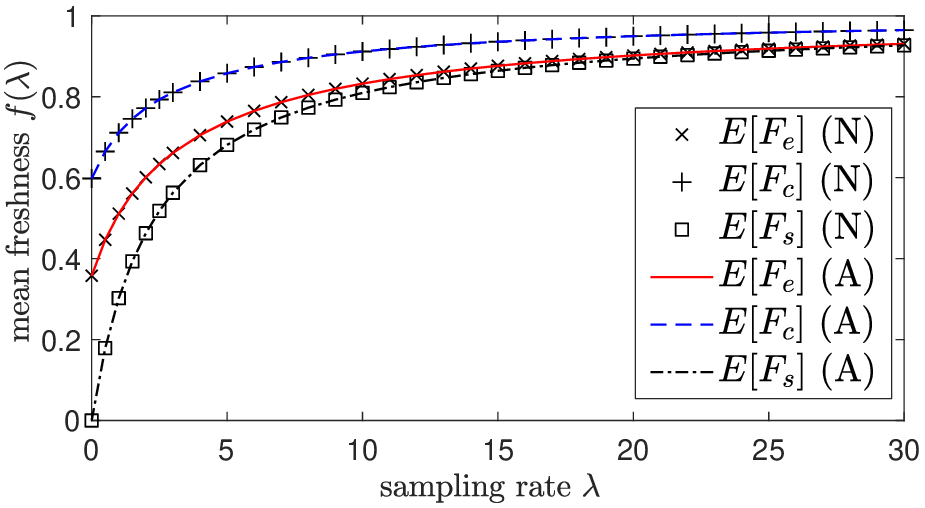}
	\caption{The mean freshness as a function of the sampling rate $\lambda$ for the  FWE, FWC, and FWS freshness models using the numerical (N) and analytical (A) methods.}
	\label{Validation}
\end{figure}
We compare our proposed water-willing solution, namely WF, with three baseline policies: i) UNIFORM policy samples each source-$n$ uniformly likely, i.e., $\lambda_n = \frac{\Lambda}{N}$, ii) PROP policy chooses the sampling rate $\lambda_n$ proportional with the source's transition intensity $r_n$, i.e., $\lambda_n \propto r_n$, iii)  INVPROP policy chooses the sampling rate $\lambda_n$ inversely proportional with the source's transition intensity $r_n$, i.e., $\lambda_n \propto \frac{1}{r_n}$. Fig.~\ref{fig1} depicts the system freshness $F_S$ as a function of the sampling ratio $\kappa$ when the WF, UNIF, PROP, and INVPROP sampling policies are employed for both FWE and FWS freshness models. For both models, we observe that the WF policy outperforms all the other three baseline policies. The UNIFORM policy yields very close to optimum freshness performance when the sampling ratio increases. However, for low sampling ratios, it is substantially outperformed by the WF policy. The PROP and INVPROP sampling policies perform poorly for small and large sampling ratios, respectively, against all other policies.

\begin{figure*}[tbh]
	\centering
	\includegraphics[width=14cm]{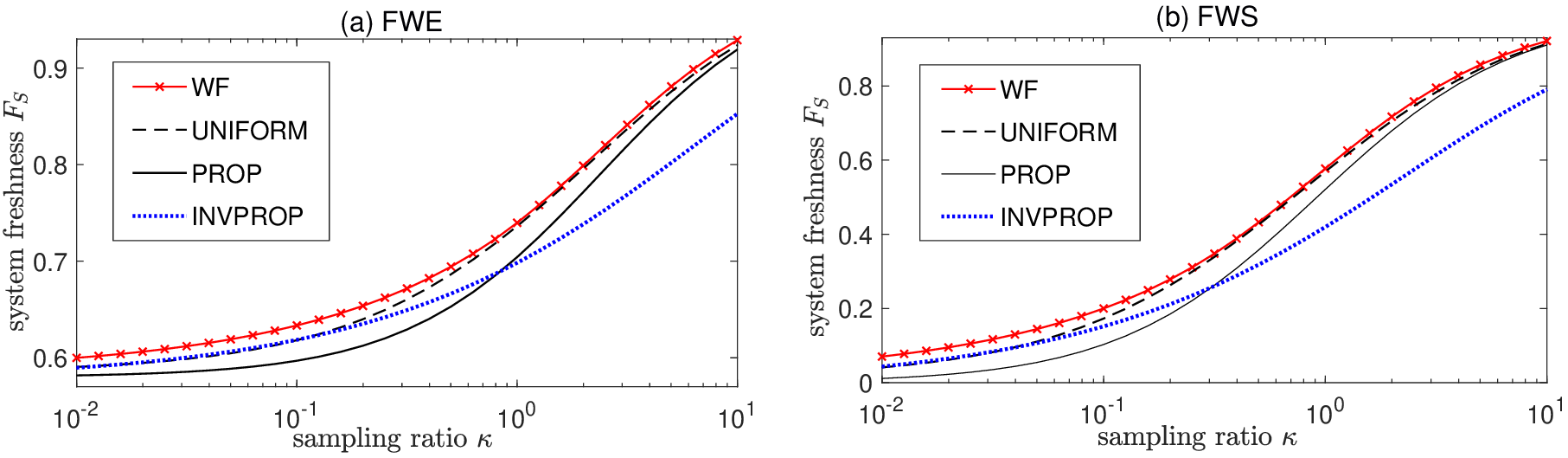}
	\caption{The system freshness $F_S$ as a function of the sampling ratio $\kappa$ for the WF, UNIFORM, PROP, and INVPROP sampling policies: (a) FWE (b) FWS.}
	\label{fig1}
\end{figure*}

Fig.~\ref{fig2} depicts the optimum sampling rate $\lambda_n$ divided by the sampling ratio $\kappa$ as a function of the source index $n$ for different values of the sampling ratio $\kappa$ for both freshness models. We observe that, for low sampling ratios, the water-filling solution chooses not to sample at all, a portion of the sources with high transition intensities, for both models. However, when the sampling ratio is sufficiently high, e.g., $\kappa=10$, the optimum sampling rate for a given source appears to be monotonically increasing with the transition intensity of the source. Although the general behavior of the optimum sampling rate with respect to source index is quite similar for FWE and FWS models, in the latter model, the optimum sampling rate is more uniform across the sources for FWS than FWE. 

\begin{figure*}[tbh]
	\centering
	\includegraphics[width=14cm]{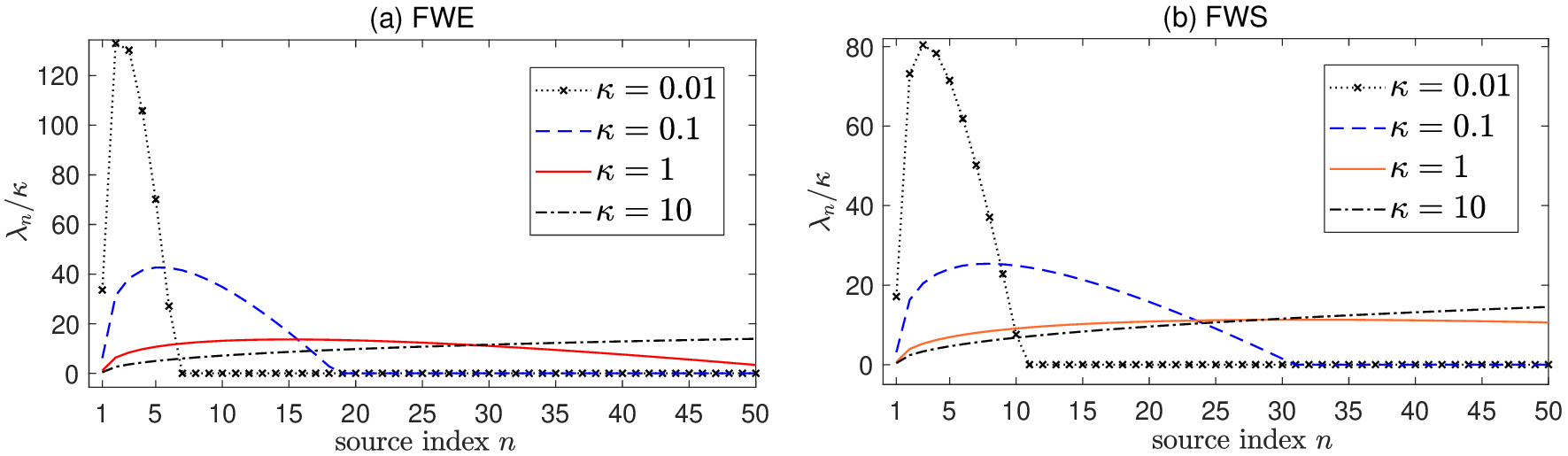}
	\caption{The optimum sampling rate $\lambda_n$ divided by $\kappa$ as a function of the source index $n$ for four values of the sampling ratio $\kappa$: (a) FWE (b) FWS.}
	\label{fig2}
\end{figure*}

\begin{figure*}[tbh]
	\centering
	\includegraphics[width=14cm]{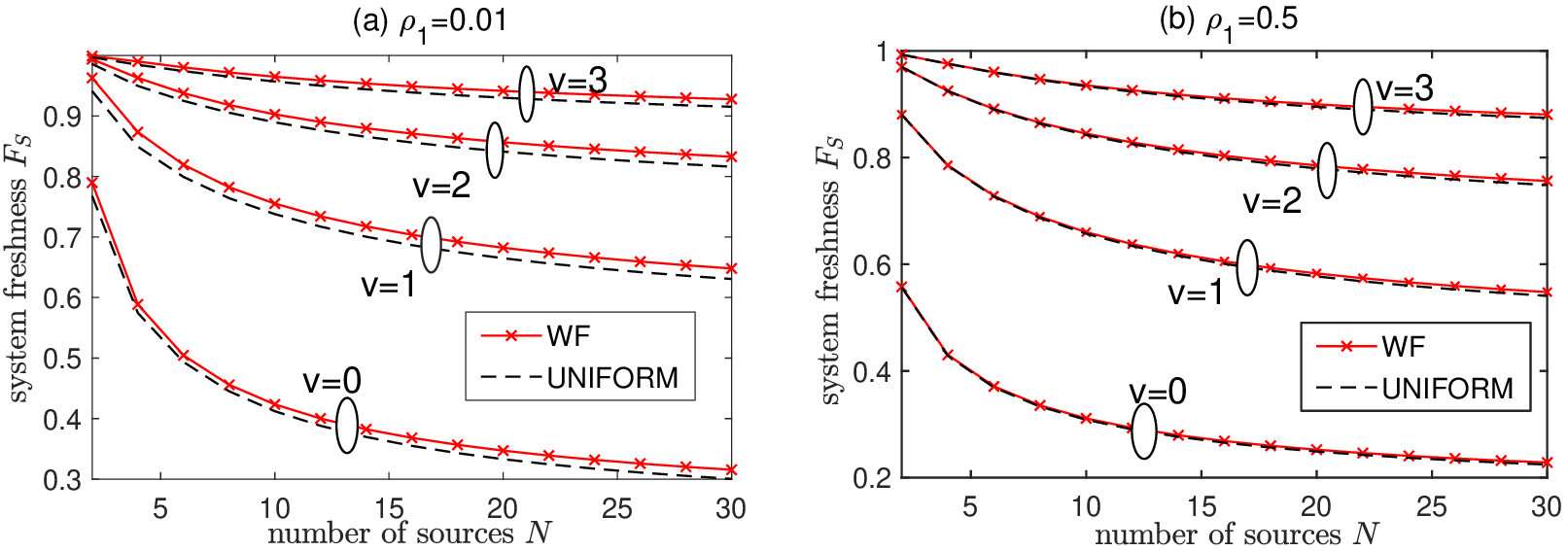}
	\caption{System freshness $F_S$ for the FWC model as a function of the number of users, $N$, for the WF and UNIFORM sampling policies for four different choices of the proximity matrix, i.e., $v \in \{0,1,2,3\}$.} 
	\label{fig3}
\end{figure*}

In the final example, we focus on the FWC model and the choice of the proximity matrix $P$ in terms of a proximity parameter $v$ such that $p_{ij}=1$ when $|i-j| \leq v$, and is zero otherwise. It is clear that FWC with $v=0$ reduces to FWE. We consider an independent collection of 
$N$ CTMCs each of which corresponds to the number of active servers in a multi-server M/M/c/c queuing system with $c$ servers, with common service rate $\gamma$, and arrival rate $\xi_n$ for source-$n$. The load for source-$n$ is denoted by $\rho_n = \xi_n/c \gamma $. In this example, we assume linearly spaced loads, $\rho_n = \rho_{n-1} + \delta$, $n=2,\ldots,N$ and the parameter $\delta$ is chosen so that the average load is fixed to $\rho_{avg} = \frac{1}{N} \sum_{n=1}^N \rho_n$. In this example, we fix $\gamma =1$ and $\rho_{avg}=0.9$. The source weights are identical as in the previous examples. The overall sampling rate bound is fixed to $\Lambda=20$. The weighted sum freshness $F_S= \frac{1}{N} \sum_{n=1}^N \mathbb{E}[F_{n,c}(t)]$ is plotted in Fig.~\ref{fig3} as a function of the number of users $N$ for FWC with the proximity parameter $v \in \{0,1,2,3\}$ and for two values of $\rho_1$ for the WF and UNIFORM policies. We have the following observations: When $\rho_1$ is close to $\rho_{avg}$, then all the sources have similar statistical behaviors and therefore the performances of WF and UNIFORM policies should be similar which is evident from  Fig.~\ref{fig3}(b). However, WF substantially outperforms the UNIFORM sampling policy in  Fig.~\ref{fig3}(a) where the smallest load source-1 has a load $\rho_1=0.01$ and consequently the sources are statistically dissimilar from each other. We have observed similar outperformance behavior of WF over the UNIFORM policy for the four values of the proximity parameter $v$ we have investigated. The weighted sum freshness decreases with increased $N$ since the sampling rate parameter $\Lambda$ is fixed for any choice of $N$. Therefore, sources are sampled at a lower intensity, on the average, as $N$ is increased in our example.

\section{Conclusions} \label{section:Conclusions}
We investigated a remote monitoring system which samples a heterogeneous collection of finite-state irreducible CTMCs according to a Poisson process, under an overall sampling rate constraint, employing a remote martingale estimate of the states of each of the CTMCs. Three binary freshness models are studied and expressions for mean freshness are obtained for all the freshness models of interest. Subsequently, the optimum sampling rates for all CTMCs are obtained using  water-filling based optimization while maximizing the weighted sum freshness. The worst case computational complexity of the proposed method is quadratic in the number of CTMCs making it possible to solve for scenarios even with very large numbers of CTMCs. The optimum monitoring policy is shown to outperform a number of heuristic baseline policies especially when there is diversity in the statistical characteristics of the underlying sources. 
Future work will consist of the study of estimators other than the martingale estimator, information sources other than CTMCs, and the case of partially known source dynamics. Study of optimization techniques other than water-filling, especially for more general information sources (not necessarily time-reversible), is an interesting research direction.
Another possibility is to take into consideration the most recently taken sample values while making a decision on which source to sample.

% \bibliographystyle{IEEEtran}
% \bibliography{whole}
% Generated by IEEEtran.bst, version: 1.14 (2015/08/26)

\end{document}